\def\snr{\textrm{SNR}}
\newtheoremstyle{slplain}
  {3pt}
  {3pt}
  {\slshape}
  {}
  {\bfseries}
  {.}%
  { }
  {}
\theoremstyle{slplain}
\newtheorem{lem}{Lemma}
\newtheorem{pro}{Proposition}
\begin{document}

\title{Optimal Relay Probing in Millimeter Wave  Cellular Systems with Device-to-Device Relaying}

\author{
\IEEEauthorblockA{Ning Wei, Xingqin Lin, and Zhongpei Zhang}
\thanks{N. Wei and Z. Zhang are with the National Key Laboratory of Science and Technology on Communications,
University of Electronic Science and Technology of China.  (Email: \{wn, zhangzp\}@uestc.edu.cn.)  X. Lin is with Ericsson Research, San Jose, CA, USA. (Email: xingqin.lin@ericsson.com.) 
}
}

\maketitle

\begin{abstract}
Millimeter-wave (mmWave) cellular systems are power-limited and susceptible to blockages. As a result, mmWave connectivity will be likely to be intermittent.  One promising approach to increasing mmWave connectivity and range is to use relays. Device-to-device (D2D) communications open the door to the vast opportunities of D2D and device-to-network relaying for mmWave cellular systems. In this correspondence, we study how to select a good relay for a given source-destination pair in a two-hop mmWave cellular system, where the mmWave links are subject to random Bernoulli blockages. In such a system, probing more relays could potentially lead to the discovery of a better relay but at the cost of more overhead. We find that the throughput-optimal relay probing strategy is a pure threshold policy: the system can stop relay probing once the achievable spectral efficiency of the currently probed two-hop link exceeds some threshold. In general, the spectral efficiency threshold can be obtained by solving a fixed point equation. For the special case with on/off mmWave links, we derive a closed-form solution for the threshold. Numerical results demonstrate that the threshold-based relay probing strategy can yield remarkable throughput gains.
\end{abstract}



\section{Introduction} 

Recently, there has been a surge of interest in millimeter-wave (mmWave) cellular systems, operating in the 10-300 GHz band \cite{rappaport2013millimeter, khan2011millimeter, roh2014millimeter}.  Utilizing the large chunks of mmWave spectrum has the potential to enable the next-generation cellular systems to support multiple gigabit-per-second data rates and also can help mitigate the current spectrum crunch \cite{andrews2014will}. Inspired by the great potential, much research work has been done to address the various aspects of mmWave cellular system design, including propagation measurements and modeling, air-interface, beamforming, radio frequency components, and network architecture. We refer to \cite{rangan2014millimeter} and references therein for a comprehensive review of the up-to-date mmWave research.

One major concern for mmWave communications is that their signals are quite susceptible to blockages \cite{singh2009blockage, bai2014coverage}. In particular, mmWave signals cannot penetrate many solid materials and even human body can attenuate the signals by as much as 20 to 35 dB \cite{lu2012modeling}. This implies that mmWave connectivity is likely to be highly intermittent and the system needs to rapidly adaptable to the time-varying radio environment.  One promising approach to overcoming the drawbacks of the peculiar mmWave propagation characteristics is to use relays \cite{rangan2014millimeter}.  The intuition is that relaying can help mmWave signals turn around the blockages and increase the chance to reach the destinations. Indeed, recent studies have shown that multi-hop relaying can greatly increase mmWave connectivity \cite{singh2009blockage, lin2014connectivity}. Moreover, relaying is essential if the system targets at providing outdoor-to-indoor mmWave coverage. Considering the importance of relaying in mmWave systems, it is of great interest to explore the relaying opportunities enabled by the emerging device-to-device (D2D) communications in cellular systems \cite{lin2013overview}.

Existing research on multi-hop cellular networks has been focused on networks operating on the spectrum below 5 GHz \cite{pabst2004relay,le2007multihop,gozalvez2013experimental, lin2014multihop}. Many different relaying schemes have been proposed: analog repeater, amply-and-forward, decode-and-forward (DF), compress-and-forward, and  demodulate-and-forward \cite{drucker1988development, laneman2004cooperative, kramer2005cooperative, chen2006modulation}. Once a relaying scheme has been determined for a source-destination pair, the next key question is which relay(s) should be selected to assist the communication in the presence of multiple potential relays. Choosing multiple relays can potentially provide a higher diversity gain but requires more overhead, which may not be desirable from a system perspective \cite{shan2009distributed}. An interesting result proved in \cite{bletsas2006simple} is that selecting the best relay can achieve the same diversity-multiplexing tradeoff obtained by using multiple relays. Therefore, studying how to select a single ``best'' relay is of interest from both theoretical and practical perspectives. The definition of the best relay, however, hinges on the adopted performance metrics such as transmission rate \cite{zhou2011link}, network lifetime \cite{zhai2009lifetime}, and spatial reuse \cite{marchenko2009selecting}. Relay selection has also been jointly studied with other problems including power control \cite{wang2009distributed}, channel allocation \cite{ng2007joint}, and multi-antenna techniques \cite{fan2007mimo}. Now multi-hop cellular networks have been/are being standardized and deployed in practice \cite{yang2009relay}. In addition to infrastructure-based relaying, D2D communications bring in new opportunities for D2D and device-network relaying \cite{lin2013overview}.

In contrast to the existing multi-hop cellular research focusing on communications using the spectrum below 5 GHz, in this correspondence we focus on relay probing and selection in mmWave cellular systems. Specifically, we study a two-hop DF mmWave system. For a given source-destination pair, determining if a relaying device is good or not requires learning the channel qualities of source-relay and relay-destination channels. Note that mmWave transmission requires beamforming to overcome the high pathloss as well as other losses due to rain and oxygen absorption and higher noise floor associated with larger bandwidth \cite{hur2013millimeter}. As a result, to estimate the channel quality of a mmWave link, the transmitter and the receiver need to steer their antenna beams towards each other before carrying out the estimation. This beam searching and steering process results in additional non-negligible communication overhead. For example, the prototype in \cite{roh2014millimeter} requires $45$ ms for each adaptive beam searching and switching. So there is a tradeoff in searching for good relaying devices for mmWave systems:  Probing more relaying devices increases the probability of finding a better relay but at the cost of more probing overhead. This tradeoff naturally raises the central question studied in this correspondence: how many relaying devices should be probed in a mmWave system?  In contrast, such non-negligible beamforming overhead in relay probing is not a concern in non-mmWave multi-hop cellular systems, and to the best of our knowledge has not been studied in the literature.

In this correspondence, the mmWave links are subject to random Bernoulli blockages. Similar models have been proposed by \cite{bai2014coverage, singh2014tractable} to study the coverage and capacity of mmWave systems. We also explicitly take into account the beamforing overhead involved in the relay probing process. For such a mmWave system, we are interested in finding the best relaying device that achieves the maximum throughput (bit/s). Using optimal stopping theory \cite{ferguson2012optimal}, we show that the throughput-optimal relay probing strategy is a pure threshold policy. Specifically, the system can stop relay probing once the achievable spectral efficiency (bps/Hz) of the currently probed two-hop DF link exceeds some threshold. Further, it is not necessary to recall any previously probed relays: just select the relay probed at the stopping stage. In general, the spectral efficiency threshold can be obtained by solving a fixed point equation. For the special case with on/off mmWave links, we derive a closed-form solution for the threshold; the derived solution reveals the impact of key system parameters on the threshold and maximum throughput. We also numerically compare the threshold-based relay probing strategy to several heuristic relay probing schemes. Numerical results demonstrate that the threshold-based relay probing strategy can yield remarkable throughput gains.

\section{System Model and Problem Formulation}

\subsection{MmWave Transmission with I.I.D. Bernoulli Blocking}

A distinct feature of mmWave transmission is its sensitivity to the blockage of the spatially distributed obstacles in the radio environment. In this correspondence we consider random Bernoulli blockages, which has been proposed by \cite{bai2014coverage, singh2014tractable} to study the system level performance of mmWave systems. Specifically, denote by $x$ and $y$ the transmitter and the receiver of a typical mmWave link. Whether the  link $x\to y$ is blocked or not is modeled by a Bernoulli random variable $\chi_{x,y}$, which equals $1$ (with probability $p$) if the link is \textit{not} blocked and $0$ otherwise. To avoid triviality, we assume $p\neq 0$. The blocking events are assumed to be i.i.d. across the links. The received signal power at receiver $y$ from transmitter $x$ is then modeled as
\begin{align}
P_{r} (x,y) = \chi_{x,y} \eta_{x,y} \cdot P_t G_t G_r       \textrm{PL}( \| y - x \| ) ,
\end{align}
where $\eta_{x,y}$ models other random factors such as shadowing for the link $x\to y$, $P_t$ denotes the transmit power, $G_t$ and $G_r$ denote the transmit and receive beamforming gains respectively, and $\textrm{PL} ( \cdot )$ is a distance-dependent pathloss function. Further, $\{\eta_{x,y}\}$ are assumed to be i.i.d. across the links.

\subsection{Relaying Protocols}

We assume a half-duplex DF relaying strategy, where the time resource for data transmission is equally divided between source-relay transmission and relay-destination transmission; this work could be straightforwardly extended to other relaying strategies. The spectral efficiency (bit/s/Hz) of the two-hop DF link $x$-$z$-$y$ is given by
\begin{align}
R_{z} = \min \left ( \frac{1}{2} \log (1+ 2 \snr_{x,z} ) ,  \frac{1}{2}  \log  (1+ 2 \snr_{z,y} ) \right ) ,
\label{eq:24}
\end{align}
where $\snr_{x,y} \triangleq P_{r} (x,y) /(N_0 W)$ with $N_0$ and $W$ denoting the noise power spectral density and channel bandwidth respectively. Denote by $F_{R_z} (r)$ the cumulative distribution function (CDF) of the spectral efficiency $R_z$ of a typical two-hop DF link, and assume that $0 \leq R_z \leq \bar{r}$, where $\bar{r}$ is the largest achievable spectral efficiency.

We propose the following relay probing protocol for a typical communication pair: (i) the source-relay link is first probed, and (ii) if the source-relay link is blocked, the probing stops; otherwise, the relay-destination link is further probed. The physical meaning of probing a mmWave link may be understood as measuring the corresponding received signal-to-noise ratio (SNR): the link is considered blocked if the SNR is below certain threshold and not blocked otherwise.
To measure the received SNR, the transmitter and the receiver need to steer their antenna beams towards each other. We assume that each beamforming process requires a time duration $\tau$. The data communication occurs after the relay probing process and  lasts for a time duration $T$, after which a new relay probing process starts. Therefore, the communication process is periodic in time and each period consists of two steps: relay probing and data transmission. 

The relay probing is repeated independently across different periods. Note that the lengths of different  periods may differ as different numbers of relays may be probed in different periods. Further, when probing a relay $z$, whether the relay-destination $z$-$y$ link is probed or not depends on the probing outcome of the source-relay $x$-$z$ link. Thus, probing the relay $z$ consumes a (random) time duration $\tau ( 1 + \chi_{x,z} )$, and for a sample period in which $n$ relays are probed, the length of the period is given by
\begin{align}
T_n = \sum_{i=1}^n \tau ( 1 + \chi_{x,z} ) + T.
\end{align}
Figure \ref{fig:44} illustrates a sample realization of a period of relay probing and data transmission.

\begin{figure}
\centering
\includegraphics[width=8cm]{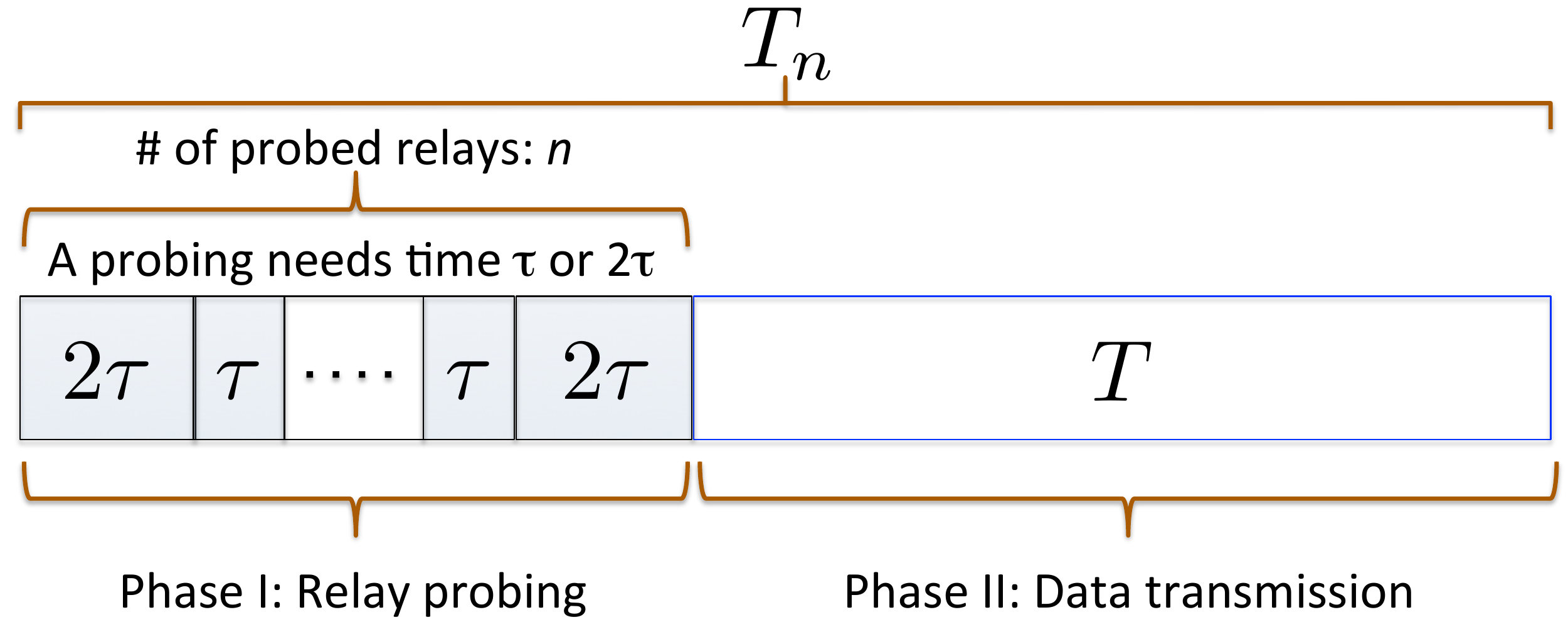}
\caption{A sample realization of relay probing and data transmission in a communication period.}
\label{fig:44}
\end{figure}


\subsection{Problem Formulation}
After probing $n$ relays, the source-destination pair can select the best relay for the two-hop communication. Accordingly,  the amount of bits that can be delivered equals
\begin{align}
U_n =W T \max ( R_{z_1},...,R_{z_n} ).
\end{align}
If we repeatedly use a stopping rule across $m$ different periods, the total number of bits that can be delivered equals $U_{N_1} + \cdots + U_{N_m}$, and the total amount of consumed time equals $T_{N_1} + \cdots T_{N_m}$, where $N_i$ denotes the number of probed relays in the $i$-th period. As a result, the average throughput (bit/s) equals $(U_{N_1} + \cdots + U_{N_m})/(T_{N_1} + \cdots T_{N_m})$. Dividing both the numerator and the denominator by $m$ and letting $m$ go to infinity, the last ratio converges to $\mathbb{E} [U_N]  / \mathbb{E} [T_N]$ by the law of large numbers. 

Our objective is to find an optimal stopping rule $N^\star$ which probes (in expectation) a finite number of relays and maximizes the average throughput:
\begin{align}
\mu^\star = \max_{N  \in \mathcal{C}} \frac{ \mathbb{E} [U_N]  }{ \mathbb{E} [T_N]  },
\label{eq:20}
\end{align}
where $\mathcal{C} = \{N \in \mathbb{N}: \mathbb{E} [T_N] < \infty  \}$ denotes the set of admissible stopping rules, and by definition $\mu^\star$ is the maximum throughput.

\section{Throughput-Optimal Relay Probing}
\label{sec:selection}

\subsection{The Associated Ordinary Optimal Stopping Problem}

It turns out very challenging to directly solve the throughput maximization problem (\ref{eq:20}). Nevertheless, we may solve (\ref{eq:20}) by considering the solution to the following ordinary optimal stopping problem:
\begin{align}
V( \mu ) \triangleq \max_{N \in \mathcal{C}}   \mathbb{E} [U_N - \mu T_N] .
\label{eq:21}
\end{align}
The following lemma proved in \cite{ferguson2012optimal} can be used to establish the relation between the solution to the throughput maximization problem (\ref{eq:20}) and the solution to the ordinary optimal stopping problem (\ref{eq:21}).  
\begin{lem}
$N^\star$ is an optimal stopping rule that attains the maximum throughput $\mu^\star$ in (\ref{eq:20}) if and only if $N^\star$ is an optimal stopping rule for the ordinary optimal stopping problem (\ref{eq:21}) with $\mu = \mu^\star$ and $V(\mu^\star)=0$.
\label{lem:1}
\end{lem}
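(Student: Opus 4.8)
The plan is to reduce everything to one elementary observation together with the computation of a single value. Set $B \triangleq WT\bar{r}$ and record two structural facts valid for every admissible rule $N \in \mathcal{C}$: since $T_N \geq T$ we have $\mathbb{E}[T_N] \in [T,\infty)$, so the expected period length is strictly positive; and since $0 \leq U_N \leq B$ (here the assumption $R_z \leq \bar{r}$ enters) the expected payoff $\mathbb{E}[U_N]$ lies in $[0,B]$ and is in particular uniformly bounded. Consequently $\mu^\star = \sup_{N \in \mathcal{C}} \mathbb{E}[U_N]/\mathbb{E}[T_N]$ is a supremum over a nonempty set ($N \equiv 1$ belongs to $\mathcal{C}$) and satisfies $0 \leq \mu^\star \leq B/T = W\bar{r} < \infty$. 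Moreover, for each $N \in \mathcal{C}$,
\[
\mathbb{E}[U_N - \mu T_N] \;=\; \mathbb{E}[T_N]\Bigl(\tfrac{\mathbb{E}[U_N]}{\mathbb{E}[T_N]} - \mu\Bigr),
\]
so, because $\mathbb{E}[T_N] > 0$, the sign of $\mathbb{E}[U_N - \mu T_N]$ coincides with the sign of $\mathbb{E}[U_N]/\mathbb{E}[T_N] - \mu$. This sign equivalence is the crux of the argument.

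Next I would prove the normalization $V(\mu^\star) = 0$. For the inequality $V(\mu^\star) \leq 0$: by definition of $\mu^\star$, every $N \in \mathcal{C}$ has $\mathbb{E}[U_N]/\mathbb{E}[T_N] \leq \mu^\star$, hence $\mathbb{E}[U_N - \mu^\star T_N] \leq 0$, and taking the supremum over $\mathcal{C}$ gives $V(\mu^\star) \leq 0$. For the reverse inequality, take a maximizing sequence $\{N_k\} \subset \mathcal{C}$ with $\rho_k \triangleq \mathbb{E}[U_{N_k}]/\mathbb{E}[T_{N_k}] \to \mu^\star$. If $\mu^\star = 0$ then $V(\mu^\star) \geq \mathbb{E}[U_{N_k}] \geq 0$ trivially. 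If $\mu^\star > 0$, then $\rho_k > \mu^\star/2$ for all large $k$, whence $\mathbb{E}[T_{N_k}] = \mathbb{E}[U_{N_k}]/\rho_k \leq 2B/\mu^\star$ stays bounded; therefore $\mathbb{E}[U_{N_k} - \mu^\star T_{N_k}] = \mathbb{E}[T_{N_k}](\rho_k - \mu^\star) \to 0$, and since $V(\mu^\star) \geq \mathbb{E}[U_{N_k} - \mu^\star T_{N_k}]$ for every $k$, letting $k \to \infty$ yields $V(\mu^\star) \geq 0$. Hence $V(\mu^\star) = 0$.

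Both directions of the equivalence then follow immediately from the sign equivalence together with $V(\mu^\star) = 0$. ($\Rightarrow$) If $N^\star \in \mathcal{C}$ attains the maximum throughput, i.e.\ $\mathbb{E}[U_{N^\star}]/\mathbb{E}[T_{N^\star}] = \mu^\star$, then $\mathbb{E}[U_{N^\star} - \mu^\star T_{N^\star}] = 0 = V(\mu^\star)$, so $N^\star$ is optimal for the ordinary stopping problem (\ref{eq:21}) at $\mu = \mu^\star$, and $V(\mu^\star) = 0$ holds by the previous step. ($\Leftarrow$) If $N^\star$ is optimal for (\ref{eq:21}) at $\mu = \mu^\star$ and $V(\mu^\star) = 0$, then $N^\star \in \mathcal{C}$ (the supremum defining $V$ ranges over $\mathcal{C}$) and $\mathbb{E}[U_{N^\star} - \mu^\star T_{N^\star}] = V(\mu^\star) = 0$; dividing by $\mathbb{E}[T_{N^\star}] \geq T > 0$ gives $\mathbb{E}[U_{N^\star}]/\mathbb{E}[T_{N^\star}] = \mu^\star$, so $N^\star$ attains the maximum in (\ref{eq:20}).

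The one genuinely delicate step is the bound $V(\mu^\star) \geq 0$: a priori a maximizing sequence for the ratio could have $\mathbb{E}[T_{N_k}] \to \infty$, in which case the products $\mathbb{E}[T_{N_k}](\rho_k - \mu^\star)$ need not vanish. The uniform payoff bound $U_N \leq WT\bar{r}$ --- a direct consequence of the modeling assumption $R_z \leq \bar{r}$ --- is exactly what forces $\mathbb{E}[T_{N_k}]$ to remain bounded along any near-optimal sequence, and is the only place where the specific structure of the pair $(U_N, T_N)$ (rather than generic optimal-stopping formalism) is invoked. Everything else is bookkeeping around the sign equivalence; this is precisely the reduction of a rate-maximization problem to an ordinary optimal stopping problem established in \cite{ferguson2012optimal}, instantiated for the present reward/cost pair.
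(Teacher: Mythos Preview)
Your argument is correct. The sign identity $\mathbb{E}[U_N-\mu T_N]=\mathbb{E}[T_N]\bigl(\mathbb{E}[U_N]/\mathbb{E}[T_N]-\mu\bigr)$ together with the unconditional verification that $V(\mu^\star)=0$ is exactly the right skeleton, and your use of the bound $U_N\le WT\bar r$ to keep $\mathbb{E}[T_{N_k}]$ bounded along a maximizing sequence cleanly handles the only nontrivial step.

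As for comparison: the paper does not actually supply its own proof of this lemma --- it simply cites \cite{ferguson2012optimal} and moves on. So there is no paper-specific argument to weigh yours against. What you have written is essentially the standard fractional-programming (Dinkelbach-type) reduction that one finds in Ferguson's notes, specialized to the present reward/cost pair; your version is self-contained where the paper is not, and in particular you make explicit why the modeling assumption $R_z\le\bar r$ is needed (to force $\mu^\star<\infty$ and to bound the expected period length along near-optimal rules), a point the paper leaves implicit in the citation.
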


Lemma \ref{lem:1} suggests the following way to solve the throughput maximization problem (\ref{eq:20}). First, for a given $\mu$, find the optimal stopping rule for the ordinary problem (\ref{eq:21}). Second, find $\mu^\star$ such that $V(\mu^\star) = 0$. Then the optimal stopping rule attaining $V(\mu^\star) = 0$ for 
the ordinary optimal stopping problem (\ref{eq:21}) is also an optimal stopping rule for the original throughput maximization problem (\ref{eq:20}).

For the ordinary optimal stopping problem (\ref{eq:21}), let us consider the following \textit{1-stage look-ahead} stopping rule:
\begin{align}
N_1(\mu) = \min \{ n \in \mathbb{N}: U_n - \mu T_n \geq \mathbb{E}[ U_{n+1} - \mu T_{n+1} |\mathcal{F}_n ]  \},
\label{eq:22}
\end{align}
where the subscript $_1$ and the parameter $\mu$ in $N_1(\mu)$ respectively indicate that the rule is only to look $1$ stage ahead and that the rule depends on $\mu$, and $\{\mathcal{F}_n\}$ is a filtration of the underlying probability space. 
The above 1-stage look-ahead stopping rule is myopic: it calls for stopping as long as the current utility is not less than the expected utility attained at the next stage. This myopic decision neglects the possibility that the expected utilities attained beyond the next stage may exceed the current utility, and thus in general is suboptimal. Somewhat surprisingly, this 1-stage look-ahead stopping rule turns out to be optimal for the problem in question, as stated in the following Lemma \ref{lem:3}.
\begin{lem}
Denote by $M_n = \max ( R_{z_1},...,R_{z_n} )$. For any $\mu > 0$, the 1-stage look-ahead stopping rule (\ref{eq:22}) is optimal. Further, it can be reduced to the following threshold rule:
\begin{align}
N_1(\mu) =  \min \{  n \in \mathbb{N}: M_n \geq \rho \},
\label{eq:25}
\end{align}
where $\rho$ is the unique root to the following equation:
\begin{align}
\mathbb{E} [ \max (R_z - \rho ,0) ] = \frac{\mu \tau}{WT} (1+p) .
\label{eq:18}
\end{align}
\label{lem:3}
\end{lem}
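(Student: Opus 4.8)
\emph{Proof strategy.} For fixed $\mu > 0$ the plan is to solve the ordinary stopping problem (\ref{eq:21}) by working with the reward sequence $Y_n \triangleq U_n - \mu T_n = WT M_n - \mu T_n$, in two stages: (i) reduce the $1$-stage look-ahead rule (\ref{eq:22}) to the threshold form (\ref{eq:25})--(\ref{eq:18}); and (ii) show this rule is optimal by verifying that the stopping problem is \emph{monotone}.

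For stage (i) I would compute the one-step look-ahead quantity conditioned on $\mathcal{F}_n$. Using $M_{n+1} = \max(M_n, R_{z_{n+1}})$, $T_{n+1} - T_n = \tau(1 + \chi_{x,z_{n+1}})$, and the fact that the $(n{+}1)$-th probed relay is independent of $\mathcal{F}_n$ with $R_{z_{n+1}}$ distributed as $R_z$ and $\mathbb{E}[\chi_{x,z_{n+1}}] = p$, one obtains
\[
\mathbb{E}[Y_{n+1}\mid\mathcal{F}_n] - Y_n \;=\; WT\, g(M_n) - \mu\tau(1+p), \qquad g(m) \triangleq \mathbb{E}[\max(R_z - m,0)].
\]
Hence (\ref{eq:22}) reads $N_1(\mu) = \min\{ n\in\mathbb{N} : g(M_n) \le \mu\tau(1+p)/(WT) \}$. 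Writing $g(m) = \int_m^{\bar r} (1 - F_{R_z}(t))\,\d t$, the function $g$ is continuous, non-increasing, strictly decreasing wherever $F_{R_z}(m) < 1$, with $g(\bar r) = 0$ and $g(0) = \mathbb{E}[R_z]$; therefore $g(\rho) = \mu\tau(1+p)/(WT)$ has a unique solution $\rho$ (to be read as $\rho \le 0$, i.e.\ ``stop after the first probe'', in the degenerate case $\mu\tau(1+p)/(WT) \ge \mathbb{E}[R_z]$), and by monotonicity of $g$ the stopping condition $g(M_n) \le \mu\tau(1+p)/(WT)$ is equivalent to $M_n \ge \rho$. This yields (\ref{eq:25}) with $\rho$ characterized by (\ref{eq:18}).

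For stage (ii) I would invoke the theory of monotone stopping problems from \cite{ferguson2012optimal}: a $1$-stage look-ahead rule is optimal provided the events $B_n \triangleq \{\mathbb{E}[Y_{n+1}\mid\mathcal{F}_n] \le Y_n\}$ are non-decreasing in $n$ and a mild regularity condition holds. By the display above, $B_n = \{ g(M_n) \le \mu\tau(1+p)/(WT) \} = \{ M_n \ge \rho \}$, and since $M_n$ is non-decreasing in $n$ we have $B_n \subseteq B_{n+1}$, so the problem is monotone. The regularity is immediate here: $Y_n = WT M_n - \mu T_n \le WT\bar r$ is bounded above, so $\mathbb{E}[\sup_n Y_n^+] < \infty$, and $T_n \ge n\tau + T \to \infty$ forces $Y_n \to -\infty$ almost surely, which both excludes the usual pathology and makes $N_1(\mu)$ finite a.s.; indeed $\mathbb{P}(N_1(\mu) > n) \le \mathbb{P}(M_n < \rho) = F_{R_z}(\rho^-)^n$ decays geometrically, and since $T_{n+1} - T_n \le 2\tau$ this also gives $\mathbb{E}[T_{N_1(\mu)}] < \infty$, i.e.\ $N_1(\mu) \in \mathcal{C}$. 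Applying the monotone-case optimality theorem then shows $N_1(\mu)$ attains $V(\mu)$.

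The algebra of stage (i) is routine once the independence of the $(n{+}1)$-th probe from $\mathcal{F}_n$ is exploited; I expect the real work to lie in stage (ii), namely stating and checking precisely the hypotheses under which ``monotone implies $1$-sla optimal'' holds \emph{within the admissible class $\mathcal{C}$} (boundedness of $Y_n^+$ together with $Y_n \to -\infty$ being the relevant ones), and in handling cleanly the degenerate boundary case $\rho \le 0$ so that the threshold characterization (\ref{eq:25})--(\ref{eq:18}) remains meaningful across all parameter regimes.
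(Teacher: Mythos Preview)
The paper omits the proof of this lemma due to page constraints, so there is no written argument to compare against directly. Your approach is correct and is precisely the standard route one would expect given the paper's reliance on \cite{ferguson2012optimal}: reduce the one-step look-ahead condition to a threshold on $M_n$ via the monotonicity of $g(m)=\mathbb{E}[\max(R_z-m,0)]$, then invoke the monotone-case optimality theorem using $B_n\subseteq B_{n+1}$ (since $M_n$ is non-decreasing), $\mathbb{E}[\sup_n Y_n^+]<\infty$ (from $Y_n\le WT\bar r$), and $Y_n\to-\infty$ a.s. Your check that $N_1(\mu)\in\mathcal{C}$ via geometric tails of $\{M_n<\rho\}$ is valid because $g(\rho)>0$ forces $\mathbb{P}(R_z\ge\rho)>0$, and your handling of the degenerate regime $\mu\tau(1+p)/(WT)\ge\mathbb{E}[R_z]$ is a useful clarification the paper leaves implicit.
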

We omit the proof of Lemma \ref{lem:3} due to page constraints.

\subsection{Threshold Policy Achieves the Optimal Throughput}

Lemma \ref{lem:3} gives the the optimal stopping rule for the ordinary problem (\ref{eq:21}) with an arbitrary $\mu >0$. As noted in Lemma \ref{lem:1}, we are particularly interested in the $\mu^\star$ such that $V(\mu^\star) = 0$, because it would lead to the optimal stopping rule for the original throughput maximization problem (\ref{eq:20}). This $\mu^\star$ can be found by invoking Lemma \ref{lem:3}; the corresponding results are summarized in the following proposition.

\begin{pro}
The optimal stopping rule for the throughput maximization problem (\ref{eq:20}) is given by
\begin{align}
N^\star = \min \left \{ n \in \mathbb{N}: R_{z_n} \geq  \frac{\mu^\star}{W}  \right \},
\label{eq:27}
\end{align}
where $\mu^\star$ is the unique maximum throughput satisfying
\begin{align}
\mu^\star = \frac{\mathbb{E}[ \max( W T R_{z} - \mu^\star T,   0 ) ]}{\tau (1+p)}.
\label{eq:28}
\end{align}
Further, the optimal relay is the relay probed at stage $N^\star$.
\label{pro:3}
\end{pro}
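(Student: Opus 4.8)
The plan is to combine Lemma~\ref{lem:1} and Lemma~\ref{lem:3} and then verify that the fixed point equation~(\ref{eq:28}) has a unique solution. First I would fix the candidate throughput $\mu^\star$ as the value of $\mu$ for which $V(\mu)=0$; by Lemma~\ref{lem:1} such a $\mu$ is exactly the maximum throughput, so the whole task reduces to (a) computing $V(\mu)$ for the stopping rule of Lemma~\ref{lem:3}, (b) solving $V(\mu)=0$, and (c) simplifying the resulting threshold rule into the form~(\ref{eq:27}).

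For step (a), I would use the optimality of the $1$-stage look-ahead rule (Lemma~\ref{lem:3}) to write $V(\mu)=\mathbb{E}[U_{N_1(\mu)}-\mu T_{N_1(\mu)}]$ where $N_1(\mu)=\min\{n:M_n\ge\rho(\mu)\}$ and $\rho(\mu)$ solves~(\ref{eq:18}). The key observation is that the probing outcomes across successive probed relays are i.i.d., so $N_1(\mu)$ is a geometric-type stopping time driven by the event $\{R_z\ge\rho\}$. Conditioning on $N_1(\mu)=n$ and using Wald-type identities, $\mathbb{E}[T_{N_1(\mu)}]=\tau(1+p)\,\mathbb{E}[N_1(\mu)]+T$ (each probed relay costs expected time $\tau(1+p)$ since the relay--destination hop is probed only when the source--relay hop is unblocked, an event of probability $p$), and $\mathbb{E}[U_{N_1(\mu)}]=WT\,\mathbb{E}[M_{N_1(\mu)}]=WT\,\mathbb{E}[R_z\mid R_z\ge\rho]$. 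Plugging these into $V(\mu)=0$ and using $\mathbb{E}[N_1(\mu)]=1/\Pr(R_z\ge\rho)$ should, after rearrangement, collapse to the identity $\mu^\star\tau(1+p)=WT\,\mathbb{E}[(R_z-\rho)^+]+ (\text{terms that cancel against }T)$; matching this with~(\ref{eq:18}) forces $\rho(\mu^\star)=\mu^\star/W$, which is precisely~(\ref{eq:27}), and the remaining identity is~(\ref{eq:28}).

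For step (b), I would establish existence and uniqueness of $\mu^\star$ by a monotonicity/continuity argument. Define $g(\mu)=\mu\tau(1+p)-\mathbb{E}[\max(WTR_z-\mu T,0)]$. The expectation term is convex, continuous, and nonincreasing in $\mu$, while $\mu\tau(1+p)$ is strictly increasing; hence $g$ is strictly increasing and continuous. Since $g(0)=-WT\,\mathbb{E}[R_z]\le 0$ (and $<0$ because $p\neq 0$ ensures $\mathbb{E}[R_z]>0$) and $g(\mu)\to\infty$ as $\mu\to\infty$, there is a unique root $\mu^\star>0$. The same argument, applied pointwise in $\mu$ to~(\ref{eq:18}) with the function $\rho\mapsto\mathbb{E}[(R_z-\rho)^+]$ being continuous and strictly decreasing on $[0,\bar r]$, gives the uniqueness of $\rho$ claimed in Lemma~\ref{lem:3} and lets me identify $\rho(\mu^\star)=\mu^\star/W$ unambiguously. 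Finally, the ``optimal relay is the one probed at stage $N^\star$'' assertion follows because on $\{N^\star=n\}$ we have $M_n=R_{z_n}\ge\mu^\star/W> M_{n-1}$, so the last probed relay is the best one seen; no recall is needed.

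The main obstacle I anticipate is step (a): carefully evaluating $\mathbb{E}[U_{N_1(\mu)}]$ and $\mathbb{E}[T_{N_1(\mu)}]$ and showing the algebra reduces $V(\mu^\star)=0$ exactly to the pair $\rho(\mu^\star)=\mu^\star/W$ together with~(\ref{eq:28}). The subtlety is bookkeeping the constant $T$ and the factor $(1+p)$ through the Wald identities and verifying that the $T$-dependent pieces in $V(\mu^\star)=0$ cancel, rather than leaving a spurious term; this requires using the defining equation~(\ref{eq:18}) for $\rho$ at the right moment. Once that cancellation is confirmed, identifying the threshold $\rho(\mu^\star)$ with $\mu^\star/W$ and reading off~(\ref{eq:27})--(\ref{eq:28}) is immediate, and admissibility $N^\star\in\mathcal{C}$ follows from $\mathbb{E}[N^\star]=1/\Pr(R_z\ge\mu^\star/W)<\infty$, which holds since $\mu^\star/W<\bar r$.
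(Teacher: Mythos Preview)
Your proposal is correct and the algebra does close: setting $V(\mu)=0$ together with~(\ref{eq:18}) indeed forces $\rho(\mu^\star)=\mu^\star/W$, and substituting back yields~(\ref{eq:28}). The Wald step is legitimate because each relay's full probing outcome (including $\chi_{x,z_i}$) is i.i.d.\ across $i$ and $N_1(\mu)$ is a stopping time for the filtration generated by these outcomes; the dependence between $\chi_{x,z_i}$ and $R_{z_i}$ is irrelevant once one works at the level of the i.i.d.\ blocks.

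Your route, however, differs from the paper's. The paper does not compute $\mathbb{E}[U_{N_1(\mu)}]$ and $\mathbb{E}[T_{N_1(\mu)}]$ via Wald at all. Instead, it first establishes by a short induction that the rule $\min\{n:M_n\ge\rho\}$ is identical to $\min\{n:R_{z_n}\ge\rho\}$, which makes the problem time-invariant (no recall of earlier relays is ever useful). It then writes the one-step Bellman optimality equation
\[
V(\mu)=\mathbb{E}\bigl[\max\bigl(WTR_{z}-\mu\tau(1+\chi_{x,z})-\mu T,\;V(\mu)-\mu\tau(1+\chi_{x,z})\bigr)\bigr],
\]
sets $V(\mu^\star)=0$, and reads off~(\ref{eq:28}) directly; comparison with~(\ref{eq:18}) then gives $\rho=\mu^\star/W$. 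This avoids the geometric-stopping and Wald bookkeeping entirely and makes the cancellation of the $T$- and $\tau(1+p)$-terms automatic. Your approach, by contrast, is more explicit and has the advantage of furnishing the values $\mathbb{E}[N^\star]=1/\Pr(R_z\ge\mu^\star/W)$, $\mathbb{E}[T_{N^\star}]$, and $\mathbb{E}[U_{N^\star}]$ along the way; you also supply the monotonicity argument for uniqueness of $\mu^\star$, which the paper asserts but does not prove.
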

\begin{proof}
We first claim that the rule (\ref{eq:25}) is equivalent to the following:
$
\hat{N}_1(\mu) = \min \{ n \in \mathbb{N}: R_{z_n} \geq \rho \}.
$
This can be shown by induction. Clearly, at stage $1$ the stopping rules $N_1(\mu)$ and $\hat{N}_1(\mu)$ are the same because $M_1 = R_{1}$. In particular, if $M_1 = R_{1} \geq \rho $, both $N_1(\mu)$ and $\hat{N}_1(\mu)$ call for stopping.  If $M_1 = R_{1} < \rho $, both $N_1(\mu)$ and $\hat{N}_1(\mu)$ call for continuing to stage $2$. At stage $2$, if $M_2 = \max (R_{z_1},R_{z_2}) \geq \rho$, then $M_1  = R_{z_2}$ because $R_{z_1} < \rho$ by induction. It follows that both $N_1(\mu)$ and $\hat{N}_1(\mu)$ call for stopping.  If $M_1 = \max (R_{z_1}, R_{z_2}) < \rho$, then $R_{z_2} < \rho$. Thus, both $N_1(\mu)$ and $\hat{N}_1(\mu)$ call for continuing to stage $3$. Repeating this argument for stages $3,4,...$, we can see that $N_1(\mu)$ and $\hat{N}_1(\mu)$ are indeed the same stopping rules.

Now we have shown that the optimal relay selection rule is to select the first relay satisfying $R_{z_n} \geq \rho$. In particular, we do not have to recall any of the previously probed relays $R_{z_m}, m < n$. Using this fact, we can see that the problem is invariant in time: If a relay is probed and we stop, the utility is $WTR_{z_1} - \mu \tau (1+\chi_{x,z_1})  - \mu T$; if we continue probing more relays, the utility is $V(\mu) - \mu \tau (1+\chi_{x,z_1}) $. Thus, the following optimality equation holds:
\begin{align}
V(\mu) = \mathbb{E} \big[ \max \big( & WTR_{z_1} - \mu \tau (1+\chi_{x,z_1})   
- \mu T,  V(\mu) - \mu \tau (1+\chi_{x,z_1}) \big ) \big] .
\end{align}
Since the optimal $\mu^\star$ is the one such that $V(\mu^\star) = 0$, the optimality equation reduces to
\begin{align}
0&= \mathbb{E}[ \max(WTR_{z_1} - \mu^\star \tau (1+\chi_{x,z_1})  - \mu^\star T,   - \mu^\star \tau (1+\chi_{x,z_1})   ) ] \notag \\
& = \mathbb{E}[ \max(WTR_{z_1} -   \mu^\star T,   0 ) ] -   \mu^\star \tau(1+p)  ,
\end{align}
from which we have
\begin{align}
\mathbb{E} \left[ \max(R_{z_1} -   \frac{\mu^\star}{W},   0 ) \right] = \frac{ \mu^\star \tau (1+p) }{WT}.
\label{eq:19}
\end{align}
Comparing (\ref{eq:18}) to (\ref{eq:19}), we conclude that $\rho =  \frac{\mu^\star}{W}$ and complete the proof.
\end{proof}

Prop. \ref{pro:3} implies that the optimal stopping rule is a pure threshold policy: the relay probing process stops once the achievable spectral efficiency of the currently probed two-hop link exceeds $\mu^\star/W$. In particular, the stopping rule is based on the state of the currently examined relay only, and it is not necessary to recall any previously probed relays: just select the relay probed at the stopping stage $N^\star$. This is a desirable feature in practical systems. In particular, due to the timing varying radio environment, the beamforming patterns trained for an earlier relay might become outdated or the two-hop link might enter deep fade. Choosing the most recently probed relay avoids such nuisances.

Note that it is challenging to derive a closed form solution for $\mu^\star$ from the fixed point equation (\ref{eq:28}) for a non-trivial spectral efficiency distribution $F_{R_z} (r)$. Instead, numerical methods are usually needed to compute $\mu^\star$. One possible numerical iterative algorithm is given as follows:
\begin{align}
\mu (t+1) = \frac{\mathbb{E}[ \max( W T R_{z} - \mu (t) T,   0 ) ]}{\tau (1+p)} ,
\label{eq:26}
\end{align}
where $t$ is the iteration index.
This iterative method is in essence a variation of Newton's method with all iterations using a unit step size. As shown in \cite{ferguson2012optimal}, for any non-negative initial value $\mu(0)$, the sequence $\{\mu(t)\}$ generated by the iteration (\ref{eq:26}) converges quadratically to $\mu^*$.

\section{Application: On/Off MmWave Links}

To obtain some insights and intuitions, we apply the derived analytical results to a homogeneous network with on/off mmWave links. Specifically, for the communication pair $(x,y)$ and any relay $z$, $P_r(x,z) = \chi_{x,z} P_1$ and  $P_r(z,y) = \chi_{z,y} P_2$, where $P_1$ and $P_2$ are some positive constants. Then the spectral efficiency of the two-hop link $x$-$z$-$y$ is given by
\[ R_z = \left\{ \begin{array}{ll}
         \bar{r} & \mbox{with probability $p^2$};\\
         0 & \mbox{with probability $1-p^2$}.\end{array} \right. \] 
Plugging the above into (\ref{eq:28}) and solving for $\mu^\star$ yields
\begin{align}
\mu^\star =   \frac{ W T p^2\bar{r} }{ (1+p)\tau + p^2 T }  .
\label{eq:30}
\end{align}
Accordingly, the optimal stopping rule is given by
\begin{align}
N^\star = \min \left \{ n \in \mathbb{N}: R_{z_n} \geq  \bar{r}  \left( 1 + \frac{1+p}{p^2} \frac{\tau}{T}  \right)^{-1}  \right \} .
\label{eq:31}
\end{align}

Several remarks are in order.

\textbf{Remark 1.} The numerator $W T p^2  \bar{r}$ in (\ref{eq:30}) is the expected number of bits that can be delivered in a communication period, while the denominator $(1+p)\tau + p^2 T$ in (\ref{eq:30}) is the expected duration of a communication period including relay probing and data transmission. Therefore, without a priori knowledge of the relays, the optimal stopping helps the system spend the right amount of resources on relay probing and yields the optimal throughput given  in (\ref{eq:30}).

\textbf{Remark 2.}
With a genie-aided relay selection, the system knows the right beamforming patterns and can directly use a two-hop link with spectral efficiency $\bar{r}$ and thus  the achievable throughput is $WT\bar{r}/T = W\bar{r}$. The ratio of the optimally stopped throughput and the genie-aided throughput is
\begin{align}
\frac{\mu^\star}{ W \bar{r} } = \frac{1}{1 + \frac{1+p}{p^2} \frac{\tau}{T}} .
\label{eq:32}
\end{align}
The gap between the optimally stopped throughput and the genie-aided throughput is solely determined by $\frac{\tau}{T}$ and $\frac{p^2}{1+p}$.
Intuitively, $\frac{\tau}{T}$ can be considered as normalized probing overhead, while $\frac{p^2}{1+p}$, which is increasing in $p \in (0,1]$, can be seen as a measure of the channel condition. As expected, the smaller the probing overhead or the better the channel condition, the smaller the probing overhead, the smaller the gap. In particular, if the mmWave links are always available, i.e., $p=1$, the ratio is $T/( T +2\tau )$, where the sole cost is $2\tau$ time resources spent on finding the right beamforming patterns.

\textbf{Remark 3.}
The optimal stopping rule (\ref{eq:31}) is a pure threshold policy: the relay probing process stops once $R_{z_n} / \bar{r} \geq \left( 1 + \frac{1+p}{p^2} \frac{\tau}{T}  \right)^{-1}$, which interestingly equals the ratio (\ref{eq:32}) of the optimally stopped throughput and the genie-aided throughput. Intuitively, the smaller the probing overhead or the better the channel condition, the higher the threshold, i.e., the system is more willing to probe more relays.

\textbf{Remark 4.}
Note that the threshold in (\ref{eq:27}) may not be unique, though the optimal throughput $\mu^\star$ is unique. Our current application with on/off links is one such example. Specifically, the spectral efficiency of any two-hop link $x$-$z$-$y$ is either $0$ or $\bar{z}$. As a result, any value in $(0, \bar{r}]$ can be used as a threshold for the optimal stopping.

\section{Simulation Results}

In this section, we provide simulation results to generate more insights into the derived theoretical results. The mmWave simulation setup closely follows \cite{khan2011millimeter} and is described as follows. The source is a pico BS located at $(-250 \textrm{ m}, 0)$ and the destination is a device located at $(250 \textrm{ m}, 0)$. The potential relays are other devices randomly distributed in the ball centered at $(0,0)$ with radius $250 \textrm{ m}$. The transmit powers of the pico BS and a device are respectively $30$ dBm and $23$ dBm. The beamforming (either transmit or receive) gains of the pico BS and a device are respectively $20$ dB and $10$ dB.
The channel bandwidth is $W=500$ MHz at a carrier frequency of $28$ GHz. The noise power spectral density is $-174$ dBm/Hz. The receiver noise figure is $7$ dB.
The pathloss model is $10\log_{10} \textrm{PL}( \| y - x \| )  = 141.3 + 20\log_{10} ( \| y - x \| /1000 )$ dB. Each link $x\to y$ is also subject to a random log-normal shadowing $\eta_{x,y}$ with $7$ dB deviation, and is available with probability $p$.
The data transmission phase lasts $T=1$ second. Note that the prototype in \cite{roh2014millimeter} requires $45$ ms for each beam searching and switching. Motivated by  this result, in our simulation we consider two beamforming time overhead values: $\tau=10$ ms and $50$ ms, indicating low overhead and high overhead, respectively.

In Figure \ref{fig:33}, we study how the throughput performance varies with the stopping spectral efficiency threshold $R_z$ under different probing overheads and link availabilities.  For each pair of $(\tau, p)$, Figure \ref{fig:33} clearly shows that there exists an optimal stopping threshold that achieves the maximum throughput. Further, the maximum throughput increases if the probing overhead $\tau$ decreases and/or the link availability $p$ increases, which is intuitive. On the one hand, for a fixed link availability $p$, the higher the  probing overhead $\tau$, the lower the optimal stopping threshold, i.e., less potential relaying devices are probed. On the other hand, for a fixed probing overhead $\tau$, the higher the link availability $p$, the higher the optimal stopping threshold, i.e., more potential relaying devices are explored.

\begin{figure}
\centering
\includegraphics[width=8cm]{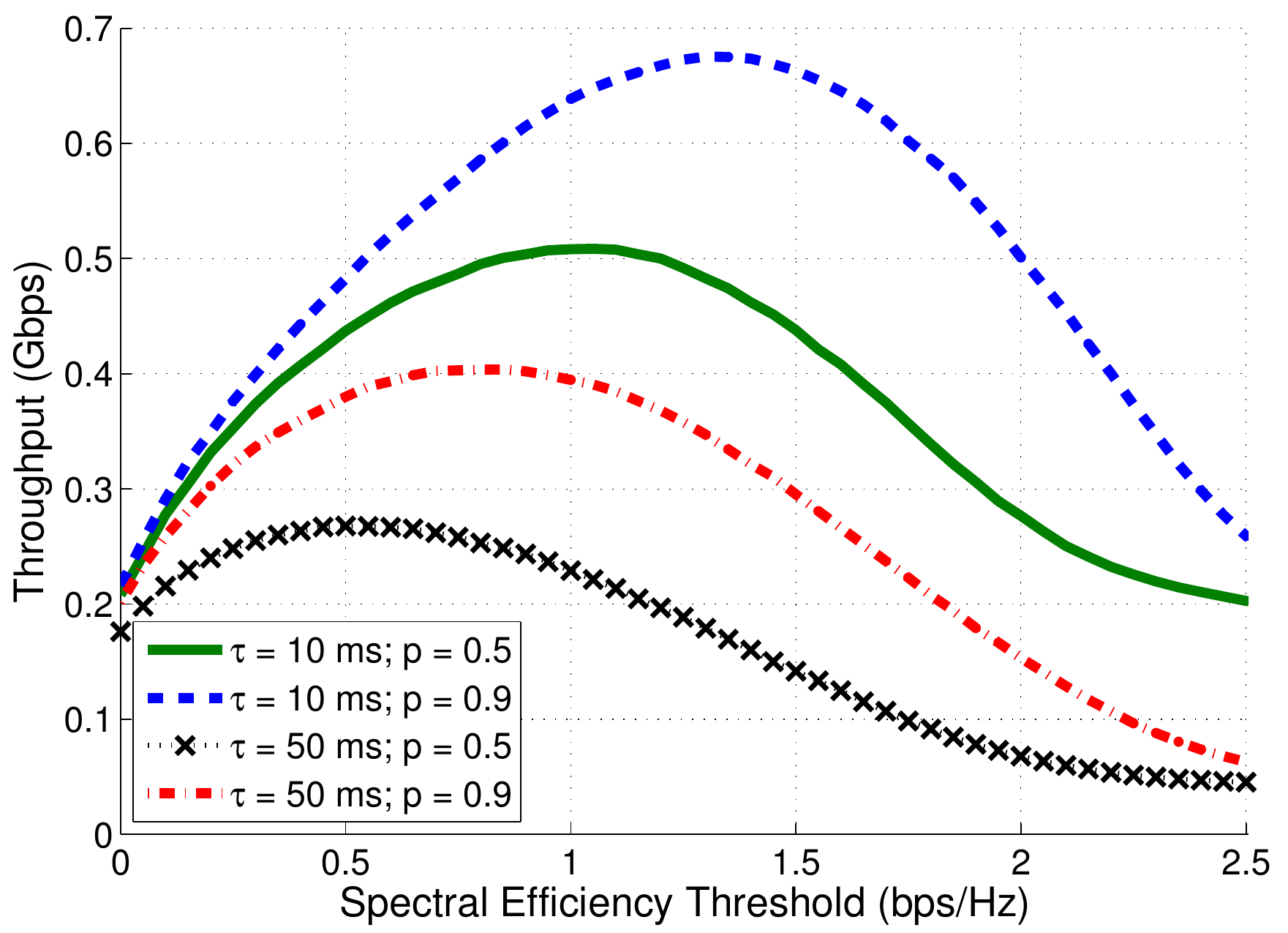}
\caption{Impact of probing overhead $\tau$ and link availability $p$ on the optimal stopping threshold.}
\label{fig:33}
\end{figure}

In Figure \ref{fig:55}, we compare the throughput performance attained by the derived optimal stopping rule to the performance of several heuristic relay probing strategies. In Figure \ref{fig:55}, the probing overhead $\tau=10$ ms, while similar observations hold for $\tau = 50$ ms. The first heuristic scheme is termed myopic stopping, in which the relay probing process is stopped at the first relay that can be used to establish both source-relay and relay-destination links. The second heuristic scheme is to probe a fixed number $\beta$ of relays and then selects the best one from the probed relays. We consider two values for $\beta$ in Figure \ref{fig:55}: $5$ and $10$. Figure \ref{fig:55} shows that, when the link availability is very low (e.g., $p=0.1$), myopic stopping is nearly optimal and outperforms the two fixed probing heuristics. This implies that, in a radio environment with many blockages, a mmWave system can stop probing relays immediately once it finds a feasible relay but still achieves close-to-optimum throughput performance. This myopic stopping scheme, however, performs poorly when the link availability $p$ increases. In particular, with a higher link availability $p$, it is valuable to explore more potential relays to find a good source-relay-destination route. As a result, the two fixed probing heuristics outperform the myopic stopping once the link availability $p$ becomes large enough. The derived optimal stopping approach outperforms the three heuristic schemes in then entire range of link availability $p$, and the throughput gains of the optimal stopping are remarkably large.

\begin{figure}
\centering
\includegraphics[width=8cm]{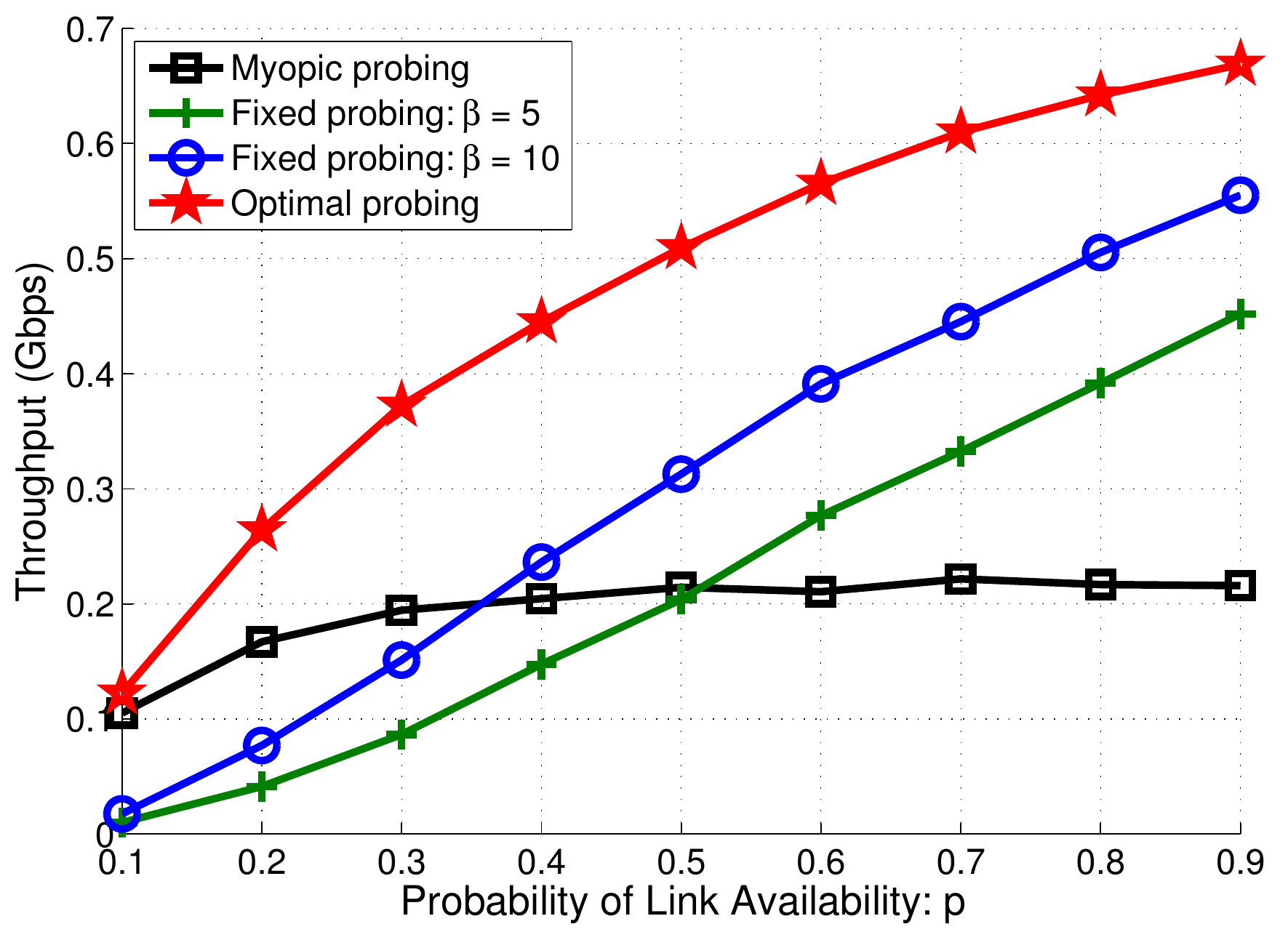}
\caption{Comparison of throughput performance under different relay probing strategies.}
\label{fig:55}
\end{figure}

\section{Conclusions}

In bandwidth-limited cellular systems, the gains of multi-hop relaying may be modest since the loss in the degrees of freedom (due to half-duplex constraints) eats into the gains. In contrast, mmWave systems have many degrees of freedom (in terms of bandwidth) but are power-limited and susceptible to blockages. Using infrastructure-based and/or D2D relaying can help increase the connectivity and range of  mmWave cellular systems. In this correspondence, we have taken some initial steps towards studying relaying probing and selection in a two-hop DF mmWave cellular system. We have found that a threshold-based policy can optimally balance the tradeoff between the throughput gain from searching a better relay and the throughput loss due to higher relay probing overhead. Numerical results have demonstrated the remarkable throughput gains of the threshold-based relay probing policy (versus several heuristic schemes). Future work may consider a heterogeneous scenario where the mmWave links are subject to non-i.i.d. Bernoulli blockages. It is also of interest to extend the current work to multiuser scenarios where many source-destination pairs exist.

\bibliographystyle{IEEEtran}
\bibliography{IEEEabrv,Reference}

\end{document}